\title{Solving the power flow equations: a monotone operator approach}
\author{Krishnamurthy Dvijotham, Steven Low, Michael Chertkov% <-this % stops a space
\thanks{K.D.  and S.L are with the department of Computing and Mathematical Sciences, Caltech, M.C is with the T-Division, Los Alamos National Laboratory, Los Alamos, NM, 87544 USA. All queries should be addressed to K.D (e-mail: dvij@caltech.edu).}}
\newtheorem{theorem}{Theorem}[section]
\newtheorem{lemma}{lemma}
\theoremstyle{definition}
\newtheorem{definition}{Definition}
\theoremstyle{remark}
\newtheorem{remark}{Remark}
\theoremstyle{remark}
\newcommand{\PF}{\mathrm{PF}}
\newcommand{\Vcc}{x}
\newcommand{\distV}{d}
\newcommand{\Ly}[1]{\mathcal{L}_y\br{#1}}
\newcommand{\flim}{\gamma}
\newcommand{\Poly}[2]{\mathrm{Poly}^{#2}\br{#1}}
\newcommand{\normnuc}[1]{\left\|{#1}\right\|_\text{nuc}}
\newcommand{\SubM}[3]{{\left[{#1}\right]}^{#2}_{#3}}
\newcommand{\pqq}{\mathrm{pqq}}
\newcommand{\nf}{k}
\newcommand{\expc}[1]{\exp^c\br{#1}}
\newcommand{\E}{\mathcal{E}}
\newcommand{\G}{\mathrm{pv}}
\newcommand{\NSB}{\mathrm{nsb}}
\newcommand{\Vs}{\mathcal{V}}
\newcommand{\Submat}[3]{\left[{#1}\right]^{#2}_{#3}}
\newcommand{\Lo}{\mathrm{pq}}
\newcommand{\Vset}{v}
\newcommand{\Int}{\mathrm{Int}}
\newcommand{\Rep}[1]{\mathrm{Re}\br{#1}}
\newcommand{\Imp}[1]{\mathrm{Im}\br{#1}}
\newcommand{\Jac}[2]{\nabla {#1}\br{#2}}
\newcommand{\Vc}{V^{\log}}
\newcommand{\Vml}{\rho}
\newcommand{\Vpl}{\theta}
\newcommand{\PQ}{PQ}
\newcommand{\PV}{PV}
\newcommand{\herm}[1]{{#1}^\ast}
\newcommand{\Com}{\mathbb{C}}
\newcommand{\Sym}[1]{\mathrm{Sy}\br{#1}}
\newcommand{\Pa}{W}
\newcommand{\JF}{J_F}
\newcommand{\opt}[1]{{#1}^\ast}
\newcommand{\expb}[1]{\exp\left({#1}\right)}
\newcommand{\logb}[1]{\log\left({#1}\right)}
\newcommand{\norm}[1]{\left\lVert {#1} \right\rVert}
\newcommand{\One}{\mathbf{1}}
\newcommand{\tran}[1]{{#1}^T}
\newcommand{\inner}[2]{\left\langle {#1},{#2} \right\rangle}
\newcommand{\N}{\mathcal{N}}
\newcommand{\ic}{\mathbf{j}}
\newcommand{\detb}[1]{\det\left({#1}\right)}
\newcommand{\tranb}[1]{{\left({#1}\right)}^T}
\newcommand{\br}[1]{\left({#1}\right)}
\newcommand{\inv}[1]{{\left(#1\right)}^{-1}}
\newcommand{\diagb}[1]{\mathrm{di}\left({#1}\right)}
\newcommand{\tr}[1]{\mathrm{tr}\left( {#1} \right)}
\newcommand{\R}{\mathbb{R}}
\newcommand{\C}{\mathcal{D}}
\newcommand{\FHinf}[1]{q_\infty}
\newcommand{\FHtwo}[1]{q_2}
\newcommand{\FHone}[1]{q_1}
\begin{document}
	\maketitle
\begin{abstract}
	The AC power flow equations underlie all operational aspects of power systems. They are solved routinely in operational practice using the Newton-Raphson method and its variants. These methods work well given a good initial ``guess'' for the solution, which is always available in normal system operations. However, with the increase in levels of intermittent generation, the assumption of a good initial guess always being available is no longer valid. In this paper, we solve  this problem using the theory of monotone operators. We show that it is possible to compute (using an offline optimization) a ``monotonicity domain'' in the space of voltage phasors. Given this domain, there is a simple efficient algorithm that will either find a solution in the domain, or provably certify that no solutions exist in it. We validate the approach on several IEEE test cases and demonstrate that the offline optimization can be performed tractably and the computed ``monotonicity domain'' includes all practically relevant power flow solutions.
		\end{abstract}
\section{Introduction}

Power systems are experiencing revolutionary changes due to various factors, including integration of renewable generation, distributed generation, smart metering, direct or price-based load-control in operations. While potentially contributing to the long-term sustainability of the power grid, these developments also pose significant operational challenges by making the power system inherently stochastic and inhomogeneous. As these changes become more widespread, the system operators will no longer have the luxury of large positive and negative reserves. Thus, operating the future power grid will require developing new computational tools that can assess the system state and security margins more accurately and faster than current approaches. Specifically, these new techniques need to go beyond linear models and ensure that the power system is  safe even in the presence of large disturbances and uncertainty, where nonlinear effects dominate. In this paper, we focus on the fundamental equations of the power system : the power flow (PF) equations. The PF equations constitute a system of nonlinear equations and are known to exhibit complex and chaotic behavior \cite{araposthatis1981analysis}\cite{varaiya1992bifurcation}. Standard techniques like Newton-Raphson and its variants often fail to converge when the operating conditions are changing rapidly or the system is close to its security margins. In such a situation, it becomes difficult to assess whether the system is actually operationally unsafe or if the Newton-Raphson method failed because of numerical difficulties or bad initialization. In this paper, we propose an approach to remedy this problem. Our approach is based on the theory of monotone operators \cite{boydmonotone}. Just as a convex optimization problem can be solved efficiently, one can find zeros of a monotone operator efficiently (in fact, the former is a particular case of the latter as the gradient of a convex function is a monotone operator.) Thus, if we can show that the nonlinear PF equations can be described by a monotone operator over a sufficiently large domain, then they can be solved within the domain of monotonicity efficiently, or certify that no solution exists within the domain. . It turns out that the PF operator is not globally monotone, however it is monotone over a restricted domain.

Our main contribution is an efficient procedure based on semidefinite programming  to characterize an operationally relevant domain (that contains all the voltages satisfying operational constraints)  in the space of voltages over which the power flow operator is monotone. The domain is specified in terms of a simple bound on the voltage phasor ratios between neighboring buses. Roughly speaking, these bounds require that the voltage phasors between neighboring buses are ``sufficiently close''. Once this is done, there are well-known efficient algorithms to compute solutions of the PF equations satisfying these bounds, or certify that no solutions exist within the monotonicity domain. These algorithms are based on solving an associated \emph{monotone variational inequality}, for which several theoretically and practically efficient algorithms have been developed (see \cite{boydmonotone} and \cite{facchinei2007finite}, chapter 12).

As a by-product of our analysis, we obtain a domain over which the power flow Jacobian is non-singular. The domain is characterized by simple bounds on the voltage phasor ratios between neighboring buses. The bounds are interesting in their own right, as they allow the system operator to monitor distance to voltage collapse or loss of synchrony (where the Jacobian becomes singular) simply by monitoring ratios between voltage phasors on neighboring lines, which can even be done in a distributed manner. Numerical tests show that the bounds obtained are non-conservative and cover a wide range of operating conditions. 

The rest of this paper is organized as follows. Section \ref{sec:Intro} covers relevant background on power systems and monotone operators. The main technical results are presented in Section \ref{sec:Monotone}. In Section \ref{sec:Related}, we discuss how our approach compares to related work. In Section \ref{sec:Num}, we present numerical results illustrating our approach on some IEEE benchmark networks. 		
\section{Modeling Power Systems}\label{sec:Intro}

\newcommand{\neb}{\sim}
\subsection{Notation}
$\R$ is the set of real numbers, $\Com$ the set of complex numbers. $\R^n,\Com^n$ denote the corresponding Euclidean space in $n$ dimensions. Given a set $\C\subset\R^n$, $\Int\br{\C}$ denotes the interior of the set. Given a complex number $x\in\Com$, $\Rep{x}$ denotes its real part and $\Imp{x}$ its imaginary part. $\herm{x}$ denotes its complex conjugate. $\norm{x}$ refers to the Euclidean norm of a vector $x\in \R^n$ or $x\in\Com^n$ and $\inner{x}{y}$ to the standard Euclidean dot product. Given an vector $x\in\R^n$, $\diagb{x}$ denotes the $n\times n$ diagonal matrix with $\br{i,i}$-th entry equal to $x_i$.The \emph{nuclear norm} of $M$ is denoted by $\normnuc{M}$ and is equal to the sum of its singular values. Given a differentiable function $f:\R^k\mapsto\R^k$, $\nabla f$ denotes the Jacobian of $f$, a $k\times k$ matrix with the $i$-th row being the gradient of the $i$-th component of $f$. For $M\in\R^{n\times n}$, $\Sym{M}=\frac{M+\tran{M}}{2}$.
Given two sets of indices $S,S^\prime$ and matrix $M$ whose rows and columns are indexed by $S^\prime$, $N=\SubM{M}{S^\prime}{S}$ denotes the $|S|\times |S|$ matrix with the following property:
\begin{align*}
N_{ij} = \begin{cases}
M_{ij} & \text{ if } i,j\in S\cap S^\prime \\
0 & \text{ otherwise }
\end{cases}
\end{align*}
$\mathrm{tril}\br{M}$ denotes the vector formed by the lower triangular entries of matrix $M$.

\subsection{Background}
We represent the transmission network as a graph $\br{\Vs,\E}$ where $\Vs$ is the set of nodes and $\E$ is the set of edges. In power systems terminology, the nodes represent the buses and the edges correpond to power lines. Buses are denoted by indices $i=0,1,\ldots,n$ and lines by ordered pairs of nodes $\br{i,j}$. We pick an arbitrary orientation for each edge, so that for an edge between $i$ and $j$, only one of $\br{i,j}$ and $\br{j,i}$ is in $\E$. If there is an edge between buses $i$ and $j$, we write $i \sim j,j\sim i$. 

The transmission network is characterized by its complex admittance matrix $Y \in \Com^{n\times n}$. $Y$ is symmetric but not necessarily Hermitian. Let $G=\Rep{Y}$, $B=\Imp{Y}$.

Let $V_i$ be the voltage phasor, $p_i$ and $q_i$ denote active and reactive injection at the bus $i$ respectively. $V$ is the vector of voltage phasors at all buses. Three types of buses are considered in this work:
\underline{\PV~ buses} where active power injection and voltage magnitude are fixed, while voltage phase and reactive power are variables. The set of \PV~ buses is denoted by $\G$. The voltage magnitude set point at bus $i\in\G$ is denoted by $\Vset_i$.\\
\underline{\PQ~ buses} where active and reactive power injections are fixed, while voltage phase and magnitude are variables. The set of \PQ~ buses is denoted by $\Lo$.\\
\underline{Slack bus}, a reference bus at which the voltage magnitude and phase are fixed, and the active and reactive power injections are free variables.  We choose bus $0$ as the slack bus as a convention. The slack bus is voltage phasor by $V_0$.\\
We denote the union of \PV~and \PQ~buses as $\NSB=\G\cup\Lo$. 

%In these notations the complete set of PF equations over the graph $\br{\Vs,\E}$ is formally stated as
%\begin{subequations}
%\begin{align}
% p_i & =\Rep{V_i{\herm{\br{YV}}_i}} \quad i\in\Lo\cup\G \label{eq:PFReal}\\
%q_i &=\Imp{V_i{\herm{\br{YV}}_i}}\quad i\in\Lo \label{eq:PFImag} \\
%|V_i| & = \Vset_i \quad i\in\G \label{eq:Vpq}\\
%V_0 &= \Vset_0 \label{eq:Vo}
%\end{align}	
%\end{subequations}
We will work with the logarithmic-polar representation of the voltage phasor:
\begin{align*}
V_i=\expb{\Vml_i+\ic\Vpl_i},\Vml_i=\logb{|V_i|},\Vpl_i=\angle V_i	
\end{align*}
The variables in the power flow problem are the phases at the non-slack buses $\Vpl_{\NSB}$ and the voltage magnitudes at the \PQ~buses $\Vml_{\Lo}$. For brevity we write$\Vc=\begin{pmatrix}
	\Vpl_{\NSB} \\ \Vml_{\Lo}
\end{pmatrix}$	where $V=\expb{\Vml+\ic \Vpl}$. We also write $V=\expc{\Vc}$ and whenever we write $V$ it is understood that the constraints on $|V_{\G}|,V_0$ are satisfied. Let $\Vpl_{ij}=\Vpl_i-\Vpl_j,\Vml_{ij}=\Vml_i-\Vml_j$.
\begin{definition}[Power Flow Operator]
Define the power flow operator $F$ as
\begin{subequations}
\begin{align}
& [F\br{V}]_i = \sum_{j=0}^n B_{ij}\expb{\Vml_i+\Vml_j}\sin\br{\Vpl_{ij}} \nonumber \\ 
&+\sum_{j=0}^n G_{ij}\expb{\Vml_i+\Vml_j}\cos\br{\Vpl_{ij}}-p_i,1\leq i \leq n\label{eq:F2a}\\
&[F\br{V}]_{n+i} =\sum_{j=0}^nG_{ij}\expb{\Vml_i+\Vml_j}\sin\br{\Vpl_{ij}}\nonumber \\
&-\sum_{j=0}^n B_{ij}\expb{\Vml_i+\Vml_j}\cos\br{\Vpl_{ij}}-q_i,1\leq i \leq |\Lo| \label{eq:F2b}
\end{align}\label{eq:F}	
\end{subequations}
\begin{remark}
This is a simplified version of the practical power flow problem: We do not account for transformers (considering the entire network in renormalized voltage units) and we also do not consider more general $\pi$-model for power lines, accounting for shunt capacitors to the ground. However, all the aforementioned (and other) important practical details can be easily incorporated in the model we use and are not restrictive in terms of applications of our results to practical power systems. 	
\end{remark}
We denote by $\pqq=\Lo+n$ the set of indices of the \PQ~buses shifted by $n$. Thus, the power flow operator $F$ is indexed by $\NSB\cup\pqq$. The power flow equations can be written as $F\br{V}=0$ solved for $\Vc$. We denote by $\JF\br{V}$ the Jacobian of $F$ with respect to $\Vc$ evaluated at $V=\expc{\Vc}$. Note that this is not the standard power flow Jacobian in the power systems, since we differentiate wrt $\logb{|V|_{\Lo}}$ rather than $|V|_{\Lo}$.  We denote by $\nf$ the total number of variables being solved for ($\nf=|\NSB|+|\Lo|$).
\end{definition}
The next lemma expresses $\JF\br{V}$ as a quadratic function of $V$:
\begin{lemma}\label{lem:Jacob}
The power flow Jacobian $\JF\br{V}\in\R^{\nf \times \nf}$ can be written as a quadratic matrix function of the voltage phasors:
\begin{align}
\sum_{i \in \Lo}\Delta_i |V_i|^2+\sum_{\br{i,j}\in\E} \Gamma_{ij} \Rep{V_i\herm{V_j}}+\Psi_{ij} \Imp{V_i\herm{V_j}}
\end{align}
where 
\begin{align}
&\Delta_i = \Submat{\begin{pmatrix} 0 & G_i \\ 0 & B_i\end{pmatrix}}{\{i,n+i\}}{\NSB\cup\pqq} \\
&\Psi_{ij}=\Submat{\begin{pmatrix}
-G_{ij} & G_{ij} & B_{ij} & B_{ij} \\
-G_{ij} & G_{ij} & -B_{ij} & -B_{ij} \\
B_{ij} & -B_{ij} & G_{ij} & G_{ij}\\
B_{ij} & -B_{ij} & G_{ij} & G_{ij}
\end{pmatrix}}{\{i,j,n+i,n+j\}}{\NSB\cup\pqq} \\
&\Gamma_{ij}=\Submat{\begin{pmatrix}
B_{ij} & -B_{ij} & G_{ij} & G_{ij} \\
-B_{ij} & B_{ij} & G_{ij} & G_{ij} \\
G_{ij} & -G_{ij} & -B_{ij} & -B_{ij}\\
-G_{ij} & G_{ij} & -B_{ij} & -B_{ij}
\end{pmatrix}}{\{i,j,n+i,n+j\}}{\NSB\cup\pqq}
\end{align}
\end{lemma}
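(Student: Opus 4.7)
The strategy is to reduce everything to differentiating the basic monomial $V_i\herm{V_j}$ of the complex phasors and then to re-assemble the resulting expressions.

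First, using $V_i\herm{V_j}=\expb{\Vml_i+\Vml_j}\br{\cos\br{\Vpl_{ij}}+\ic\sin\br{\Vpl_{ij}}}$, I would rewrite \eqref{eq:F2a}--\eqref{eq:F2b} as linear combinations of $\Rep{V_i\herm{V_j}}$ and $\Imp{V_i\herm{V_j}}$ with coefficients $G_{ij},B_{ij}$. So modulo the constants $p_i,q_i$, each component of $F$ is already a quadratic form in $\br{V,\herm{V}}$.

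Second, I would differentiate term-by-term. The chain-rule identities $\Diff{V_i}{\Vpl_k}=\ic V_i\,\delta_{ik}$ (for $k\in\NSB$) and $\Diff{V_i}{\Vml_k}=V_i\,\delta_{ik}$ (for $k\in\Lo$) yield
\begin{align*}
\Diff{\br{V_i\herm{V_j}}}{\Vpl_k}&=\ic\br{\delta_{ik}-\delta_{jk}}V_i\herm{V_j},\\
\Diff{\br{V_i\herm{V_j}}}{\Vml_k}&=\br{\delta_{ik}+\delta_{jk}}V_i\herm{V_j}.
\end{align*}
Hence each Jacobian entry is a linear combination of monomials $V_a\herm{V_b}$, i.e.\ a quadratic function of the phasors.

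Third, I would collect coefficients. For $j=i$, $V_i\herm{V_i}=|V_i|^2$; the $\Vpl$-derivatives vanish because $\delta_{ii}-\delta_{ii}=0$, and the $\Vml$-derivative survives only when $i\in\Lo$, producing the $\Delta_i|V_i|^2$ summand. For $j\neq i$, exactly one of $\br{i,j},\br{j,i}$ lies in $\E$; summing the four contributions (the two endpoints against the two relevant classes of coordinates) and using $\Rep{V_i\herm{V_j}}=\Rep{V_j\herm{V_i}}$, $\Imp{V_i\herm{V_j}}=-\Imp{V_j\herm{V_i}}$, together with the admittance symmetry $G_{ij}=G_{ji},B_{ij}=B_{ji}$, packs the non-zero entries of $\JF\br{V}$ into $4\times 4$ blocks indexed by $\{i,j,n+i,n+j\}$ that match $\Gamma_{ij}$ and $\Psi_{ij}$.

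The main obstacle is purely bookkeeping: verifying that the four entries in each row and column of the displayed $4\times 4$ blocks carry the correct signed combinations of $G_{ij},B_{ij}$. Care is also needed because not every coordinate in $\{i,j,n+i,n+j\}$ actually labels a variable of $\Vc$ --- the slack phase and $\PV$ bus magnitudes are fixed --- but this is exactly what the submatrix extractor $\Submat{\cdot}{\{i,j,n+i,n+j\}}{\NSB\cup\pqq}$ is designed to handle. There is no conceptual difficulty once the quadratic structure above is in place.
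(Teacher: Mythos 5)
Your proposal is correct and follows essentially the same route as the paper, whose proof is simply ``via direct differentiation'': rewriting each component of $F$ as a linear combination of $\Rep{V_i\herm{V_j}}$ and $\Imp{V_i\herm{V_j}}$ and applying the chain rule in the log-polar coordinates is exactly the computation the paper leaves implicit. The derivative identities you state for $V_i\herm{V_j}$ and the sign bookkeeping under the swap $i\leftrightarrow j$ are the right details to check, and they do reproduce the displayed $\Delta_i$, $\Gamma_{ij}$, $\Psi_{ij}$ blocks.
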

\begin{proof}
Via direct differentiation.
\end{proof}
\begin{remark}
The Jacobian need not be symmetric, as can be seen by the lack of symmetry in the matrices $\Delta_i,\Psi_{ij},\Gamma_{ij}$. However, each entry of the Jacobian matrix is a quadratic function of the voltage phasor $V$.
\end{remark}

\subsection{Monotone Operators}\label{sec:Mon}

We now review briefly the theory of monotone operators, as is relevant to the approach developed in this paper. For details and proofs of the results quoted in this section, we refer the reader to the recent survey \cite{boydmonotone}.
A function $H:\R^k\mapsto\R^k$ is said to be a monotone operator over a convex domain $\C$ if
\[\inner{H\br{x}-H\br{y}}{x-y}\geq 0 \quad \forall x,y\in\C\]
A monotone operator is a generalization of a monotonically increasing function (indeed, if $k=1$, the above condition is equivalent to monotone increase: $x\geq y \implies H\br{x}\geq H\br{y}$).
$H$ is said to be strictly monotone over $\C$ if
\[\inner{H\br{x}-H\br{y}}{x-y}>0 \quad \forall x,y\in\C,x\neq y\]
A common example of a monotone operator is the gradient of a differentiable convex function. 
\begin{definition}[Monotone Variational Inequality]
Let $\C\subset \R^k$ be a convex set and $H$ be a monotone operator over $\C$. The variational inequality (VI) problem associated with $H$ and $\C$ is:
\begin{align}
\text{ Find } x\in\C \text{ such that } \inner{H\br{x}}{y-x}\geq 0\quad \forall y\in\C\label{eq:VI}
\end{align}
Define the \emph{normal cone} to $\C$ at $x$ as $\N_\C\br{x}=\{y:\inner{y}{z-x}\leq 0 \forall z \in \C\}$. Then, the variational inequality is equivalent to finding $x\in\C$ such that $-H\br{x}\in\N_\C\br{x}$. 
\end{definition}

The following result shows that monotone variational inequalities with compact domains always have a solution and can be solved efficiently. 
\begin{theorem}\label{thm:VI}
If $H$ is strictly monotone operator over a compact domain $\C$, then \eqref{eq:VI} has a unique solution $\opt{x}$. Further, an approximate solution $x_\epsilon\in\C$ satisfying
\begin{align}
 \norm{x_\epsilon-x}\leq \epsilon\label{eq:VIapprox}
\end{align}
can be found using at most $O\br{\logb{\frac{1}{\epsilon}}}$ evaluations of $H$ and projections onto $\C$.
 \end{theorem}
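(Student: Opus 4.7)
The plan is to reduce the variational inequality to a fixed-point equation for the projected residual map and then read off existence, uniqueness, and convergence from that reformulation. Using the characterization of the Euclidean projection via the normal cone, one has $y - x \in \N_\C(x)$ if and only if $x = P_\C(y)$. Applying this with $y = x - \alpha H(x)$ for any step size $\alpha > 0$, the VI condition $-H(x) \in \N_\C(x)$ becomes equivalent to the fixed-point equation
\begin{align*}
x = T_\alpha(x), \qquad T_\alpha(x) := P_\C\br{x - \alpha H(x)}.
\end{align*}
Since $\C$ is compact and convex, $P_\C$ is well-defined and nonexpansive, so $T_\alpha$ is a continuous self-map of $\C$, and Brouwer's fixed-point theorem immediately yields at least one solution $\opt{x}$.

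For uniqueness, suppose $x_1,x_2 \in \C$ both solve \eqref{eq:VI}. Plugging $y = x_2$ into the VI at $x_1$ and $y = x_1$ into the VI at $x_2$ and adding the two inequalities yields
\begin{align*}
\inner{H(x_1) - H(x_2)}{x_1 - x_2} \leq 0,
\end{align*}
which by strict monotonicity of $H$ forces $x_1 = x_2$. This gives the unique solution $\opt{x}$.

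For the complexity bound, I would analyze the basic projection iteration $x_{k+1} = T_\alpha(x_k)$ and show it is a contraction for suitable $\alpha$. Using nonexpansiveness of $P_\C$,
\begin{align*}
\norm{T_\alpha(x) - T_\alpha(y)}^2 &\leq \norm{(x-y) - \alpha\br{H(x)-H(y)}}^2 \\
&= \norm{x-y}^2 - 2\alpha \inner{H(x)-H(y)}{x-y} + \alpha^2 \norm{H(x)-H(y)}^2.
\end{align*}
If $H$ is strongly monotone with modulus $\mu>0$ and Lipschitz with constant $L$ on $\C$, the right-hand side is bounded by $(1 - 2\alpha\mu + \alpha^2 L^2)\norm{x-y}^2$, which is strictly less than $\norm{x-y}^2$ for $\alpha \in (0, 2\mu/L^2)$. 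Picking $\alpha = \mu/L^2$ gives contraction factor $\rho = \sqrt{1 - \mu^2/L^2} < 1$, and Banach's fixed-point theorem yields $\norm{x_k - \opt{x}} \leq \rho^k \norm{x_0 - \opt{x}}$, so $k = O(\log(1/\epsilon))$ iterations, each consisting of one evaluation of $H$ and one projection onto $\C$, suffice to reach the accuracy in \eqref{eq:VIapprox}.

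The main obstacle is that the theorem is stated under strict monotonicity alone, whereas the linear rate in the second part really requires strong monotonicity plus Lipschitz continuity (which are the hypotheses underlying the $O(\log(1/\epsilon))$ guarantees in \cite{boydmonotone}). I would therefore either (a) strengthen the hypothesis to strongly monotone and Lipschitz $H$, which is the form actually used in the sequel since the power flow operator is continuously differentiable and the monotonicity certificate produced in Section \ref{sec:Monotone} is in fact a positive-definiteness certificate on $\Sym{\JF(V)}$ that yields a quantitative $\mu$, or (b) invoke compactness of $\C$ together with continuous differentiability of $H$ to extract uniform constants $\mu, L$ from the pointwise strict monotonicity. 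Once those constants are available, the contraction estimate above closes the proof.
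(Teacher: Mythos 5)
The paper itself offers no proof of this theorem: it is quoted as a known result, with the reader referred to the survey \cite{boydmonotone} for details. Your proposal therefore supplies what the paper omits, and it follows the standard route one would find in that reference: the equivalence of \eqref{eq:VI} with the fixed-point equation $x = P_\C\br{x-\alpha H\br{x}}$ via the normal-cone characterization of projection, Brouwer for existence, the two-point monotonicity argument for uniqueness, and the contraction estimate for the projected iteration to get the linear rate. All of these steps are correct as written (for existence you implicitly need $H$ continuous on $\C$, which holds in the paper's application since $H=F_\Pa$ is smooth).

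Your observation that the theorem as stated is too weak for its own conclusion is also correct and worth making explicit: strict monotonicity alone does not yield an $O\br{\logb{1/\epsilon}}$ rate. The one place to be careful is your fallback option (b). Pointwise strict monotonicity of a $C^1$ map on a compact set does \emph{not} let you extract a uniform modulus $\mu>0$: the scalar map $H\br{x}=x^3$ on $\brs{-1,1}$ is smooth and strictly monotone, yet $H^\prime\br{0}=0$, so no strong-monotonicity constant exists and the contraction factor degenerates to $1$. What rescues the argument in this paper is not strict monotonicity but the stronger certificate actually produced in Section \ref{sec:Monotone}, namely $\Sym{\Pa\JF\br{V}}\succ 0$ on the compact set $\C$; continuity of the minimum eigenvalue over a compact set then gives a uniform $\mu>0$, and smoothness gives $L$. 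So your option (a) is the right repair, and (b) only works if restated as ``uniform positive definiteness of the symmetrized Jacobian over the compact domain,'' which is exactly what the semidefinite certificate \eqref{eq:MonCond} delivers. With that restatement, your contraction estimate closes the proof.
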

 \begin{remark}
In this manuscript, we are interested in finding zeros of the PF operators introduced above. 
We can use monotone operator theory for this as follows: Suppose $H$ satisfies the hypotheses of theorem \ref{thm:VI}. If there exists a point $\opt{x}\in\C$ with $H\br{\opt{x}}=0$, then this is the unique solution of the variational inequality (figure \ref{fig:VIsolb}. Conversely, if the variational inequality has a solution with $H\br{\opt{x}}\neq 0$ (figure \ref{fig:VIsola}), then have a certificate that there is no solution of $H\br{x}=0$ with $x\in\C$.
\end{remark}
\begin{figure}
\centering
 \begin{subfigure}[b]{0.5\columnwidth}
       \includegraphics[width=\textwidth]{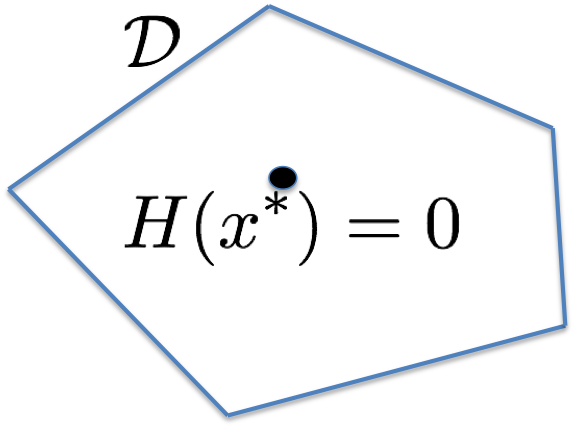}
       \caption{VI: solution in interior}
       \label{fig:VIsolb}
   \end{subfigure}
   \centering
 \begin{subfigure}[b]{0.8\columnwidth}
       \includegraphics[width=\textwidth]{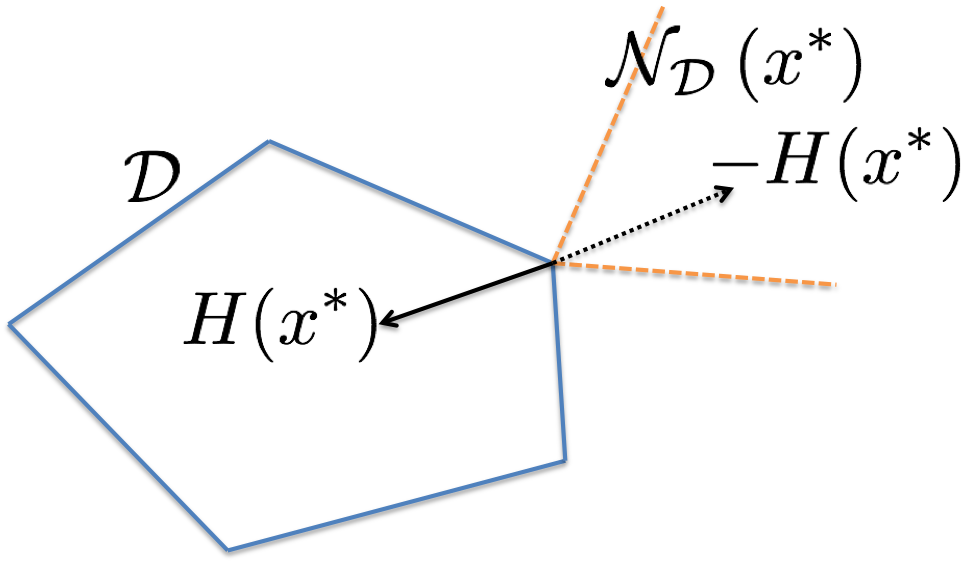}
       \caption{VI: solution on boundary}
       \label{fig:VIsola}
   \end{subfigure}
\end{figure}

The next result provides a simple characterization of monotonicity for differentiable operators:
 \begin{theorem}\label{thm:VIcond}
 Suppose that $H$ is differentiable. Then $H$ is strictly monotone over $\C$ if and only if
\[\Jac{H}{x}+\tran{\Jac{H}{x}}\succ 0 \quad \forall x \in \C\] 	
 \end{theorem}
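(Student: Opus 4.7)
The plan is to exploit the convexity of $\C$ to reduce the vector monotonicity condition to a one-dimensional question about the directional behavior of $H$ along line segments, and then match it against the quadratic form defined by the symmetric part of the Jacobian.

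For the sufficient direction — i.e., $\Sym{\Jac{H}{x}}\succ 0$ on $\C$ implies strict monotonicity — I would fix any two distinct $x,y\in\C$ and define the scalar function
\[
\phi\br{t}=\inner{H\br{y+t\br{x-y}}}{x-y},\quad t\in\sqb{0,1}.
\]
Convexity of $\C$ puts the entire segment inside $\C$, so $\phi$ is well defined and differentiable. The chain rule gives
\[
\phi'\br{t}=\br{x-y}^{T}\Jac{H}{y+t\br{x-y}}\br{x-y}.
\]
Since $u^{T}Mu=u^{T}\Sym{M}u$ for any matrix $M$, the hypothesis together with $x-y\neq 0$ forces $\phi'\br{t}>0$ throughout $\sqb{0,1}$. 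Consequently $\phi$ is strictly increasing and
\[
\inner{H\br{x}-H\br{y}}{x-y}=\phi\br{1}-\phi\br{0}>0,
\]
which is the strict monotonicity condition on $\C$.

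For the converse, I would pick any $x\in\Int\br{\C}$ and any nonzero direction $v$; then $x+tv\in\C$ for sufficiently small $t>0$. Strict monotonicity applied to the pair $\br{x,x+tv}$ yields $\inner{H\br{x+tv}-H\br{x}}{v}>0$; dividing by $t$ and letting $t\downarrow 0$ produces $v^{T}\Jac{H}{x}v\geq 0$, hence $\Sym{\Jac{H}{x}}\succeq 0$. Boundary points of $\C$ are then handled by continuity of $\Jac{H}{\cdot}$ and a one-sided perturbation along feasible directions.

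The main technical wrinkle is the converse direction: the limiting argument naturally produces only a positive semidefinite conclusion, whereas the statement demands a strict inequality. This gap is a well-known subtlety (e.g.\ $x\mapsto x^{3}$ on $\R$ is strictly monotone yet has a vanishing derivative at the origin), and the standard resolution is either to read the theorem as the sufficient direction plus a PSD converse, or to impose a mild regularity hypothesis on $H$ that precludes the Jacobian from degenerating on a set of positive measure. Since all of our uses of Theorem~\ref{thm:VIcond} invoke only the sufficient direction — verifying positive definiteness of $\Sym{\JF\br{V}}$ via semidefinite programming in order to conclude strict monotonicity of $F$ — the practical content of the proof resides entirely in the first step above.
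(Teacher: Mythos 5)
The paper does not actually prove Theorem~\ref{thm:VIcond}; it is quoted as a standard fact, with the reader referred to the survey cited at the start of Section~\ref{sec:Mon}, so there is no in-paper argument to compare yours against line by line. Your sufficiency argument is the standard one and is correct: convexity of $\C$ keeps the segment inside the domain, $\phi\br{t}=\inner{H\br{y+t\br{x-y}}}{x-y}$ has derivative $\phi'\br{t}=\tran{\br{x-y}}\Jac{H}{y+t\br{x-y}}\br{x-y}$, the identity $\tran{u}Mu=\tfrac{1}{2}\tran{u}\br{M+\tran{M}}u$ converts the hypothesis into $\phi'>0$, and the mean value theorem gives $\phi\br{1}>\phi\br{0}$, i.e.\ strict monotonicity. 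Your treatment of the converse is also the right call rather than a gap in your argument: as literally stated the ``only if'' direction is false, since the limiting argument can only produce $\Jac{H}{x}+\tran{\Jac{H}{x}}\succeq 0$, and $H\br{x}=x^{3}$ on $\R$ shows the strict inequality genuinely fails for strictly monotone maps. The clean equivalence is between (non-strict) monotonicity on $\C$ and positive semidefiniteness of the symmetrized Jacobian there; the strict versions are linked only by the one-way implication you proved. Since the paper uses only that direction --- Theorem~\ref{thm:Monotone} and the certificate \eqref{eq:MonCond} infer strict monotonicity of $F_{\Pa}$ from positive definiteness of $\Sym{\Pa\JF\br{V}}$ --- the imprecision in the biconditional is harmless downstream, and your proof supplies everything the paper actually relies on.
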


\section{Monotonicity of the Power Flow Operator}\label{sec:Monotone}

In this section, we study the monotonicity of the PF operator $F$ \eqref{eq:F}. As described in Section \ref{sec:Mon}, zeros of $F$ (solutions to the PF equations) can be found efficiently if $F$ is monotone. Thus, if we can prove that the PF operator is monotone, the PF solutions can be found efficiently. Since PF equations can have multiple isolated solutions, it is not possible that the PF operator is globally monotone because this would imply a unique solution to the PF equations. Thus, we focus on characterizing domains over which the PF operator (or a scaled version of it) is monotone. This leads to the constrained power flow problem:
\begin{definition}[Constrained Power Flow Problem]\label{prob:ConstPF}
Given a set $\C \subset \R^{\nf}$, the constrained power flow problem is to determine whether the power flow equations $F\br{V}=0$ have a solution with $\Vc\in \C$, and if so, compute the solution. We denote this problem by $\PF\br{\C}$.
\end{definition}
\begin{remark}
In this paper, we will characterize sets $\C$ such that the constrained power flow problem can be solved. 	
\end{remark}

%Proofs of all theorems in this section are deferred to the appendix section \ref{sec:App}. 

\subsection{Characterization of Domains of Monotonicity of the Power Flow Operator}

We now derive a procedure to characterize the domain of monotonicity of the PF operator \eqref{eq:F}. We first note that solving the equations $F\br{V}=0$ is equivalent to solving the equations $\Pa F\br{V}=0$ for an invertible matrix $\Pa \in\R^{\nf \times \nf}$. Define $F_{\Pa}\br{x}=\Pa F\br{\expc{x}}$ for brevity. 

\begin{theorem}\label{thm:Monotone}
Let $\C$ be any compact convex set in $\R^\nf$ such that $\exists \Pa \in \R^{\nf\times \nf}$ satisfying 
\begin{align}
\Sym{\Pa\JF\br{V}}\succ 0 \quad \forall V:\Vc\in \C\label{eq:MonCond}	
\end{align}
Then, $F_{\Pa}$ is strictly monotone over the set $\C$. Let $\opt{x}$ be the unique solution to the monotone variational inequality:
\begin{subequations}
\begin{align}
&\text{Find } x \in\C \nonumber\\
&\inner{F_{\Pa}\br{x}}{y-x}\geq 0 \quad\forall y\in\C \label{eq:PFVI}
\end{align}	
\end{subequations}
Then, if $F\br{\expc{\opt{x}}}=0$, $\expc{\opt{x}}$ is the unique solution of the PF equations in $\C$. Otherwise, there are no solutions of $F\br{V}=0,\Vc\in\C$. Thus, the constrained power flow problem can solved efficiently with domain $\C$.
\end{theorem}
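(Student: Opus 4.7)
The plan is to reduce the statement to a direct application of Theorems \ref{thm:VIcond} and \ref{thm:VI}, plus a short uniqueness argument tying the VI solution to actual zeros of $F$. The first step is to compute the Jacobian of $F_{\Pa}\br{x}=\Pa F\br{\expc{x}}$ with respect to $x$. By the chain rule and the definition of $\JF$ (which, as emphasized in the text, is already taken with respect to $\Vc$), we get $\nabla F_{\Pa}\br{x}=\Pa\JF\br{\expc{x}}$. The hypothesis \eqref{eq:MonCond} then reads $\Sym{\nabla F_{\Pa}\br{x}}\succ 0$ on $\C$, so Theorem \ref{thm:VIcond} yields strict monotonicity of $F_{\Pa}$ on the compact convex set $\C$. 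Theorem \ref{thm:VI} then gives existence and uniqueness of the solution $\opt{x}$ to \eqref{eq:PFVI}, plus the efficient algorithm mentioned in the claim.

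Next I would verify that $\Pa$ is invertible, since the whole reduction depends on $F\br{V}=0\iff F_{\Pa}\br{\Vc}=0$. If $u\neq 0$ lies in the left kernel of $\Pa$, then $\tran{u}\Pa\JF\br{V}=0$, so $\tran{u}\Sym{\Pa\JF\br{V}}u=0$, contradicting \eqref{eq:MonCond}. Hence $\Pa$ is invertible, and the equivalence of zeros holds pointwise.

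The remaining step is the equivalence between ``$F\br{\expc{\opt{x}}}=0$'' and ``$\PF$ has a solution in $\C$''. In one direction, if $\tilde{x}\in\C$ satisfies $F\br{\expc{\tilde{x}}}=0$, then $F_{\Pa}\br{\tilde{x}}=0$ as well, so $\inner{F_{\Pa}\br{\tilde{x}}}{y-\tilde{x}}=0\geq 0$ for every $y\in\C$, and $\tilde{x}$ is a solution of the variational inequality \eqref{eq:PFVI}. By uniqueness of $\opt{x}$ we obtain $\tilde{x}=\opt{x}$, so in particular $F\br{\expc{\opt{x}}}=0$, and at the same time this $\tilde{x}$ is the \emph{only} such point in $\C$. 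Contrapositively, if $F\br{\expc{\opt{x}}}\neq 0$, then no $\tilde{x}\in\C$ can satisfy $F\br{\expc{\tilde{x}}}=0$, yielding the promised certificate of infeasibility.

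I do not expect a serious obstacle here: once the chain-rule identification of $\nabla F_{\Pa}$ is in place, the result is essentially a bookkeeping exercise combining Theorems \ref{thm:VIcond} and \ref{thm:VI} with the invertibility of $\Pa$. The only subtle point worth stating explicitly is that a zero of $F$ in $\C$ automatically solves the VI (because the inner product vanishes), which is what lets uniqueness of the VI solution be transferred to uniqueness of the PF solution inside $\C$.
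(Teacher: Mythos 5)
Your proof is correct and takes essentially the same route the paper intends: the paper's own proof is a one-line appeal to ``a straightforward application of the theory of monotone operators,'' and your argument simply supplies the missing details (the chain-rule identification $\nabla F_{\Pa}\br{x}=\Pa\JF\br{\expc{x}}$, Theorems \ref{thm:VIcond} and \ref{thm:VI}, and the observation that any zero of $F$ in $\C$ solves the VI and hence coincides with $\opt{x}$ by uniqueness). The invertibility check on $\Pa$ is a sound and reassuring addition, though as you have structured the argument only the trivial direction $F\br{V}=0\Rightarrow F_{\Pa}\br{\Vc}=0$ is actually used, so it is not strictly necessary.
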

\begin{proof}
This is a straightforward application of the theory of monotone operators. A similar proof can be found in \cite{DjMonPF}. 
\end{proof}

\begin{remark}
Theorem \ref{thm:Monotone} characterizes condition \eqref{eq:MonCond} under which the constrained power flow problem can be solved. In the reminder of this section, we show how one can construct sets $\C$ \emph{algorithmically} to satisfy this condition. Notice that \eqref{eq:MonCond} is an instance of a \emph{containment problem} that has been studied in the optimization literature \cite{kellner2013containment}.
\end{remark}

\subsection{Monotonicity Domains: 2-bus Network}
In order to motivate our choice of $\C$, we first consider a simple 2-bus network. Bus $0$ is the slack bus and bus $1$ is a \PQ~bus. Let $y=g-\ic b$ denote the complex admittance (conductance $g$, susceptance $b$) of the line between $0$ and $1$ and ignore any shunt elements. Let the slack bus voltage be $V_0=1$ (magnitude $1$, zero phase) and the voltage phasor at bus $1$ be $\expb{\rho+\ic\theta}$. The PF equations  are given by
\begin{align*}
p_1 & =b\expb{\rho}\sin\br{\theta}-g\expb{\rho}\cos\br{\theta}+g\expb{2\rho} \\
q_1 & =-g\expb{\rho}\sin\br{\theta}-b\expb{\rho}\cos\br{\theta}+b\expb{2\rho}	
\end{align*}
The power flow Jacobian (scaled by $\expb{-\rho}$) is given by
\begin{align*}
\begin{pmatrix}
b\cos\br{\theta}+g\sin\br{\theta} & 2g\expb{\rho}-g\cos\br{\theta}+b\sin\br{\theta} \\
-g\cos\br{\theta}+b\sin\br{\theta} & 2b\expb{\rho}-g\sin\br{\theta}-b\cos\br{\theta}
\end{pmatrix}	
\end{align*}
Choosing $\Pa=\frac{1}{b^2+g^2}\begin{pmatrix}b & -g \\ g & b\end{pmatrix}$, the scaled Jacobian becomes 
\begin{align*}
& \Pa\JF\br{\rho,\theta} = 	
& \expb{\rho}\begin{pmatrix}
	\cos\br{\theta} & \sin\br{\theta} \\
	\sin\br{\theta} & 2\expb{\rho}-\cos\br{\theta}
\end{pmatrix}
\end{align*}
The condition $\Sym{W\JF\br{\rho,\theta}}\succ 0$ reduces to the diagonal entries and the determinant being positive, which simplifies to: 
\begin{align}
\cos\br{\theta}>\frac{1}{2}\expb{-\rho}=\frac{1}{2}\frac{1}{|V|}.\label{eq:TwoBusCond}	
\end{align}
This is a well-known voltage stability criterion for the two-bus network \cite{PFIndex}. In this case, it is also easy to see that when $\cos\br{\theta}=\frac{1}{2|V|}$, the Jacobian is in fact singular, so that this is ``maximal'' monotonicity domain, in the sense that there is no monotonicity domain that contains it. In fact, the ``high voltage'' branch of the power flow solution set will lie always within this domain, until the point of maximum loadability is reached (beyond this point there are no solutions to the PF equations). 

This example also shows that the choice of $\Pa$ is critical. For example, if we choose $\Pa=I$, then we would automatically require $b\cos\br{\theta}+g\sin\br{\theta}>0$  (in addition to other conditions),leading to more restrictive constraints than \eqref{eq:TwoBusCond}.
\subsection{Algorithmic Computation of the Monotonicity Domain}\label{sec:MonCC}
For general networks, the situation is not as simple as in the case of the 2-bus network. There is no simple analytical choice of $\Pa$ that determines a large monotonicity domain. Instead, we propose a computational technique based on semidefinite programming. Motivated by the form of the 2-bus constraint \eqref{eq:TwoBusCond}, we consider domain $\C\br{\flim}$ defined as:
\begin{subequations}
\begin{align}
&\{\br{\Vpl_{\NSB},\Vml_{\Lo}}: 	\cos\br{\theta_{ij}}\geq \flim_{ij} \expb{|\rho_{ij}|}\, \forall \br{i,j}\in \E\} \\
&= \{\br{\Vc}: \Rep{V_i\herm{V_j}}\geq \flim_{ij} \max\br{|V_i|^2,|V_j|^2}\forall \br{i,j}\in \E\}
\end{align}\label{eq:SubEq}	
\end{subequations}
In the special case of the 2-bus network, the condition \eqref{eq:SubEq} reduces to $\cos\br{\theta}\geq \flim \expb{\rho}$, and as we saw before, one can choose $\flim=\frac{1}{2}$. The specific form of $\C\br{\flim}$ is convenient for our purposes and produces non-conservative estimates of the true monotonicity domain. Further, the condition reduces to simple bounds on the voltage phasors at the end of each transmission line, and may be useful in stability monitoring. However, depending on particular operational constraints at play in the system, we may consider other domains as well. 

\begin{comment}
For larger networks, it can be interpreted as a bound on a ``generalized distance'' $\distV\br{V_1,V_2}$ between voltage phasors:
\begin{align*}\begin{cases}
-\logb{\min\br{\Rep{\frac{V_1}{V_2}},\Rep{\frac{V_2}{V_1}}}} & \text{ if } \Rep{V_1\herm{V_2}}>0 \\	
0 & \text{ otherwise }
\end{cases}
\end{align*}
By construction $\distV\br{V_1,V_2}$ is always non-negative, and equal to $0$ if and only if $V_1=V_2 \neq 0$. $\distV\br{V_1,V_2}$ is infinite if the angle between $V_1$ and $V_2$ phasors exceeds $90$ degrees. The constraints $\C\br{\flim}$ can be written as $\C\br{\flim}=\{\Vc:\distV\br{V_i,V_j}\leq -\logb{\flim_{ij}}\leq \forall \br{i,j}\in\E\}$. 
\end{comment}

\subsection{Computation of the Monotonicity Domain}\label{sec:MomRelax}
Certifying that $\C\br{\flim}$ is a monotonicity domain (i.e, checking that it satisfies condition \eqref{eq:CheckMonOpt}) amounts to checking that the following problem is feasible:
\begin{subequations}
\begin{align}
& \text{Find } \Pa \text{ such that } \nonumber\\
& \Sym{\Pa\JF\br{V}}\succ 0 \quad \forall V:\Vc\in\C 
\end{align}	\label{eq:CheckMonOpt}
\end{subequations}
Rewriting this explicitly, we need to solve the following optimization problem:
\begin{subequations}
\begin{align}
\max_{\Pa} \min_{z\in \R^{\nf},V \in \Com^n} \quad & \tran{z}\Sym{\Pa\JF\br{V}}z \\
\text{ Subject to } & \Rep{V_i\herm{V_j}}\geq \flim_{ij}\max\br{|V_i|^2,|V_j|^2} \\
					& |V_i|^2	=	\Vset_i^2, i \in \G \\
					& V_0=1 \\ 
					& \tran{z}z	=	1 \\
					& \normnuc{\Pa} \leq 1
\end{align}		
\end{subequations}

The constraint $\normnuc{\Pa}\leq 1$ is an arbitrary scaling constraint (since the problem is homogeneous in $\Pa$) where $\norm{\cdot}_*$ denotes the nuclear norm, or the sum of singular values of $\Pa$. If the optimal value of the above problem is positive, the optimal solution $\opt{\Pa}$ is such that $F_{\opt{\Pa}}$ is strictly monotone over $\C$. The inner minimization is a nonconvex quartic optimization problem in $\br{z,V}$ and is NP-hard in general. We relax the inner optimization problem using the moment-relaxation approach \cite{lasserre2009moments}. Let $\Vcc=\begin{pmatrix} 1 & \tran{z} & \tran{\Rep{V}} & \tran{\Imp{V}}\end{pmatrix}$. Let $\Poly{\Vcc}{i}$ denote the vector of all the monomials of degree upto $i$  in $\br{\Vcc}$ (with the first entry equal to the 0-degree monomial $1$). For example 
$\Poly{\Vcc}{2}=\begin{pmatrix}
	1 & \Vcc_1 & \Vcc_2 & \ldots & \Vcc_1^2 & \Vcc_1\Vcc_2 \ldots & 
\end{pmatrix}$.

Let $m$ be the size of this $\Poly{\Vcc}{4}$. We define a moment vector $y$ of the same size as $\Poly{\Vcc}{4}$ and the linear operator on the space of all degree-$4$ polynomials in $\br{\Vcc}$:
$\mathcal{L}_y\br{\sum_{i=1}^m c_i\mathrm{Poly}^4_i\br{\Vcc}}=\sum_{i=1}^m c_iy_i$.

We also define the \emph{localizing matrices}\cite{lasserre2009moments} (for $i=1,2$):
$X_{il} = \Poly{\Vcc}{i}\tranb{\Poly{\Vcc}{i}}$.
\begin{align*}
&\max_{\Pa} \min_{y} \quad  \tr{\Pa\Ly{\JF\br{V}z\tran{z}}} \\
&\text{ Subject to }  \\
& \Ly{\br{\Rep{V_i\herm{V_j}}-\flim_{ij}|V_i|^2}X_{1l}}\succeq 0\\
& \Ly{\br{\Rep{V_i\herm{V_j}}-\flim_{ij}|V_j|^2}X_{1l}}\succeq 0\\
& \Ly{\br{|V_i|^2-\Vset_i^2}X_{1l}}= 0, i \in \G \\
& \Ly{\br{|V_0-1|^2}X_{1l}}= 0 \\ 
& \Ly{\br{\tran{z}z-1}X_{1l}}=0 \\
& \Ly{X_{2l}}\succeq 0\\
& y_1=1,\normnuc{\Pa} \leq 1
\end{align*}	
This is a convex-concave saddle point problem with compact feasible sets. Thus, we have strong duality, we can switch the $\min$ and $\max$ and reduce the problem to:
\begin{subequations}
\begin{align}
& \min_{y,t} \quad  t\\
&\text{ Subject to }  \\
& \Sym{\Ly{\JF\br{V}z\tran{z}}}\preceq tI \label{eq:NegPSD}\\
& \Ly{\br{\Rep{V_i\herm{V_j}}-\flim_{ij}|V_i|^2}X_{1l}}\succeq 0\\
& \Ly{\br{\Rep{V_i\herm{V_j}}-\flim_{ij}|V_j|^2}X_{1l}}\succeq 0\\
& \Ly{\br{|V_i|^2-\Vset_i^2}X_{1l}}= 0, i \in \G \\
& \Ly{\br{|V_0-1|^2}X_{1l}}= 0 \\ 
& \Ly{\br{\tran{z}z-1}X_{1l}}=0 \\
& \Ly{X_{2l}}\succeq 0 \\
& y_1=1
\end{align}	\label{eq:MomentRelax}	
\end{subequations}
If the above problem has a positive optimal value, then we obtain a certificate for \eqref{eq:MonCond}. 
\eqref{eq:NegPSD} is a semidefinite program in $y$ that can be solved efficiently using interior point methods \cite{boyd2004convex}. $\Pa$ is given by the dual variable corresponding to the constraint \eqref{eq:NegPSD}. 

In section \ref{sec:CompTime}, we discuss ideas on scaling the approach to large networks by exploiting sparsity, and list the sizes of the resulting semidefinite programming problems.

\section{Discussion}\label{sec:Related}
The AC power flow equations are fundamental to all aspects of power systems operations and planning, and several decades of research have gone into developing algorithms for solving the power flow problem. A discussion of the Newton's method (by far the most popular algorithm developed for solving the PF equations) and its variants  can be found in standard textbooks on power engineering \cite{bergen2000power}. The idea here is to update the power flow variables according to:
\begin{align*}
&\Vc\br{i+1}\gets \Vc\br{i}-\eta_t\inv{\JF\br{\Vc\br{i}}}F\br{\br{\Vc\br{i}}^t}
\end{align*}
For a small enough step-size $\eta_t=\eta$, if the inverse Jacobian remains bounded ($\norm{\inv{\JF\br{\Vc}}}\leq \kappa$ for some $\kappa>0$), then the algorithm will converge to a power flow solution. Given a good initial guess for $\Vc$, Newton's method converges rapidly. However, if  the Jacobian becomes close to singular, then Newton's method can behave badly. In general, Newton's method can exhibit very complicated behavior and have fractal basins of attraction \cite{epureanu1998fractal}. Several approaches (damping, trust region methods etc.) have been proposed and studied to improve the stability and convergence of Newton's method.

In the context of power systems, optimal multiplier methods were proposed \cite{schaffer1988nondiverging}\cite{iwamoto1981load} to prevent divergence of Newton's method. These adapt the choice of step length in Newton's method to ensure that the algorithm does not diverge, although it may not converge to a power flow solution even if the solution exists. A recent survey of the optimal multiplier algorithms along with numerical comparisons can be found in \cite{tate2005comparison}.

An alternative approach to solving the PF equations is based on homotopy, where the power flow problem is first solved for an ``easy'' injection vector $p^0,q^0$, and the injections are changed gradually while tracking the power flow solution until the actual injections $p,q$ are reached. This approach is called continuation power flow in the context of power systems \cite{ajjarapu1992continuation}. It was initially claimed that this algorithm is capable of finding all power flow solutions, but this claim was shown to be incorrect recently \cite{molzahn2013counterexample}. Numerically this approach has been shown to be effective \cite{jin2005summarization}, although theoretical guarantees are still lacking , to the best of our knowledge.

Recently, a new approach based on complex analytic continuation techniques \cite{trias2012holomorphic} has generated a lot of interest and been shown to be effective for practical power systems problems. However, again, theoretical analysis of the conditions under which it is guaranteed to work are yet to be established..

A rigorous approach combining homotopy and tools from algebraic geometry was proposed \cite{mehta2014numerical} to find all power flow solutions. However, computational scalability of this approach is currently limited to small networks.

In \cite{madaniconvexification}, the authors propose a semidefinite programming (SDP) relaxation approach to solve the power flow equations. The authors characterize the set of voltage vectors such that the SDP has a rank-1 solution and hence recovers a physically valid power flow solution. This result is similar in spirit to ours. However, the set of recoverable voltages is specified by a nonconvex nonlinear matrix inequality which does not have a simple interpretation in terms of operational constraints on voltages. An exact comparison of the relative power of our approach and the approach proposed in this paper is a topic of future work.

The closest works to what is described in the paper are \cite{DjEnergyFun} and \cite{DjMonPF}. In \cite{DjEnergyFun}, we studied the case of losses power networks $\Rep{Y_{ij}}=0$ for every $\br{i,j}\in\E$.  In \cite{DjEnergyFun}, we have shown that in the case of lossless networks, the power flow equations can be solved by solving a convex optimization problem: Minimization of the energy function. The results of \cite{DjEnergyFun} are  a special case of the results in this paper. For lossless networks, the power flow operator $F$ is the gradient of the energy function for the structure preserving model of power systems dynamics \cite{varaiya1985direct} and $\JF$ is the Hessian. In this case, the domain of monotonicity of the power flow operator coincides with the domain of convexity of the energy function. We obtained a sufficient condition, expressed as a nonlinear but convex matrix inequality in $\Vc$, characterizing the set over which the energy function is convex (or equivalently, the power flow operator is monotone). However, for lossy networks, the monotonicity domain can no longer be characterized analytically, which led us to the computational procedure described in section \ref{sec:MomRelax}.

In \cite{DjMonPF}, we described an alternate approach to computing monotonicity domains. The main difference between the work here and that presented in \cite{DjMonPF} is the choice of coordinates used to compute the power flow Jacobian. In this paper, we worked with the log-polar coordinates $V_i=\expb{\Vml_i+\ic\Vpl_i}$. The alternative , explored in \cite{DjMonPF},  is to work with cartesian coordinates $V_i=V_i^x+\ic V_i^y$. The choice of coordinates has been a subject of several numerical studies. In our experience, using log-polar coordinates has two main advantages:\\
1) The size of the Jacobian is smaller, since the voltage magnitudes at \PV~buses are fixed and do not need to be solved for. This significantly speeds up the computational procedure described in \ref{sec:Mon}.\\
2) Our experiments show that the size of the monotonicity domain is significantly larger. The reason for this phenomenon is not clear yet, and will be a topic of future investigation.
\section{Numerical Results}\label{sec:Num}

In this section, we numerically validate our results. Section \ref{sec:MonExtent} describes the monotonicity domains computed for various test networks. Section \ref{sec:CompTime} discusses the sizes of the semidefinite programming problems produced while solving \eqref{eq:MomentRelax} and section \ref{sec:CompNewton} compares our approach to Newton's method and the default solver from matpower \cite{zimmerman2011matpower}, which is a variant of Newton's method with some step-size control.

\subsection{Extent of Monotonicity Domains}\label{sec:MonExtent}

We first consider the 3-bus network plotted in figure (\ref{fig:3busnetowrk}). All three buses are \PV~buses with voltage magnitudes set at $1$ p.u. In figure (\ref{fig:3busmondomain}), the blue region is the closed region whose boundary is defined by $\detb{\JF\br{V}}=0$. Any convex domain of strict monotonicity domain cannot intersect the boundary of this region, since at the boundary the Jacobian is singular which means that $\Sym{\Pa\JF\br{V}}\not\succ 0$ for any $\Pa$. Thus, any convex domain of strict monotonicity that contains the point $V=\One$ (all voltages equal to 1 p.u with $0$ phase, the zero-load PF solution) must be contained inside the blue region. Our approach computes a bound of $\flim=.08$, which defines the red region. As one can see from the figure, this covers almost the entire blue region, so that our estimate of the monotonicity domain is nearly tight.
\begin{figure}[htb]
        \centering
        \begin{subfigure}[b]{0.6\columnwidth}
                \includegraphics[width=\textwidth]{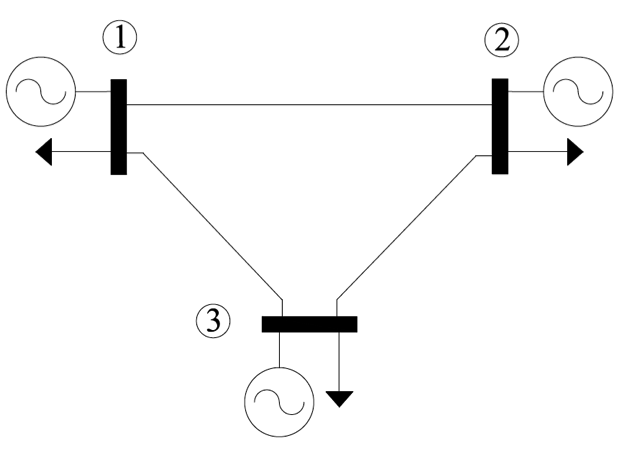}
                \caption{3 bus network}
                \label{fig:3busnetowrk}
        \end{subfigure}
        \begin{subfigure}[b]{0.7\columnwidth}
                \includegraphics[width=\textwidth]{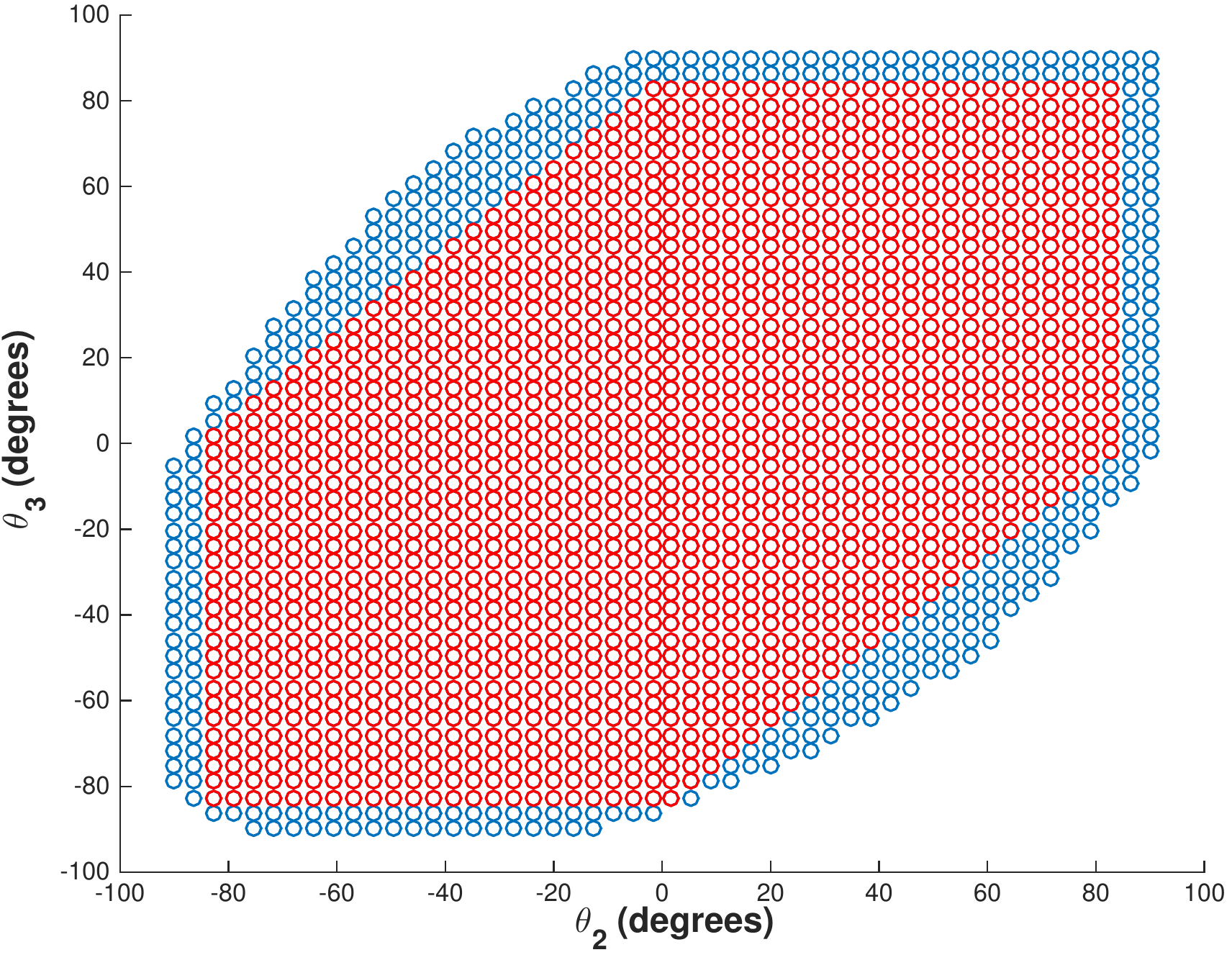}
                \caption{3 bus monotonicity domain. Blue Region: True Monotonicity Domain. Red Region: Estimated Monotonicity Domain}
                \label{fig:3busmondomain}
        \end{subfigure}
\end{figure}	
\begin{figure}[htb]
\centering
\includegraphics[width=.7\columnwidth]{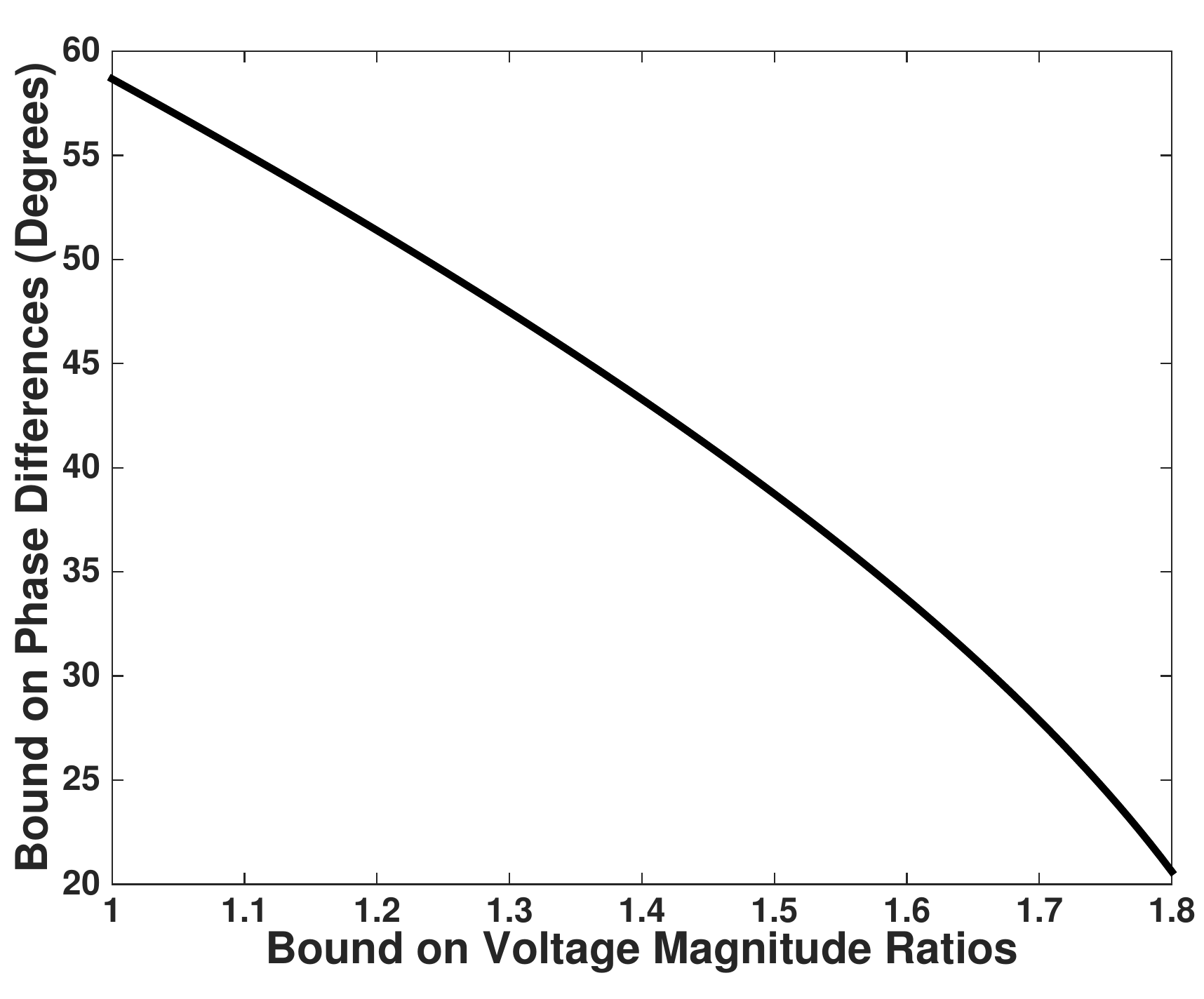}
                \caption{39 bus network: Tradeoff between bound on voltage ratios and phase differences}
                \label{fig:39bus}
\end{figure}

For larger networks, we find the smallest uniform value $\flim>0$ such that the optimal value of \eqref{eq:MomentRelax} is positive with $\flim_{ij}=\flim$. The values are listed in table \ref{tab:MinLim}. The first row lists the case number (matpower case file), the second row the minimum value of $\flim$ for the network (as a uniform bound on all transmission lines) and the third row turns $\flim$ into a bound on phase differences, assuming that the voltage magnitudes can vary between $.9$ and $1.1$ p.u at all \PQ~buses. The number in the third row is a simplification of the actual constraint (which couples voltage magnitudes and phases). More generally, there is a tradeoff between the bound on voltage magnitudes ratios and phase differences. We plot this tradeoff for the 39 bus network in figure (\ref{fig:39bus}). This shows that as voltage magnitudes are allowed to fluctuate more, phase differences need to kept within smaller limits to remain within the monotonicity domain.
\begin{table}
\begin{center}
\begin{tabular}{|c|c|c|c|c|}
\hline
Case Number & 9 & 14 & 30 & 39 \\
\hline
 Min $\flim$ & .31 & .34 & .41 & .52 \\
 \hline 
 Max $|\theta_i-\theta_j|$ & $67.7 \deg$ & $65.4 \deg$ & $59.9 \deg	$ & $50.54 \deg$ \\
\hline
\end{tabular}		
\end{center}
\caption{Minimum $\flim$ for different matpower test cases}	\label{tab:MinLim}
\end{table}

\subsection{Computation Time}\label{sec:CompTime}
Solving the moment relaxation \eqref{eq:MomentRelax} is the most expensive part of this approach. However, the situation is made better by the following factors:\\
1): This computation only depends on the network parameters and not on the specific injection profiles. Thus, the computation can be done offline as long as the network topology is fixed. Further, even if the network topology changes (because of transmission switching, loss of a generator/line etc.), one can do  offline computations for a large set of topology scenarios.\\
2) The problem \eqref{eq:MomentRelax} has a lot of sparsity. The Jacobian inherits the sparsity of the network, and each constraint only involves a small number of variables. 
One can exploit the sparsity in the moment relaxation to get more tractable Semidefinite Programs. Without sparsity, the size of the SDP would grow as $n^2$, where $n$ is the number of buses in the network, which becomes intractable for $n$ beyond $6$ or so. However, by exploiting sparsity using the ideas described in \cite{lasserre2006convergent}, we can reduce the size of the resulting SDPs significantly. We list the size of the largest PSD constraint for all the IEEE cases we tested in table \ref{tab:SizPSD}. One can see that the size grows gracefully.

With the above considerations, we can solve the 39 bus network certification problem in about 20 minutes on a MacBookPro 2.6 GHz Intel Core i7 notebook. We believe that by using the following ideas, the approach can be scaled to networks with several thousand buses:\\
1) In recent work \cite{molzahnLatest}\cite{molzahn2014sparsity}, the authors have shown that moment relaxations of the OPF problem can be solved for the networks with several thousand buses by applying higher order moment relaxations selectively. \\
2) It is possible to consider more restrictive conditions than positive definiteness of $\Sym{\Pa\JF\br{V}}$. For example, one can require that this matrix is diagonally dominant or generalized diagonally dominant \cite{ahmadi2014dsos}. This would give potentially more conservative results, but allow the SDP constraint to be replaced by an LP (linear program) or SOCP (second order cone program), thus giving a more scalable approach. \\%Similar ideas have recently been used successfully as an alternative to SDP based SOS relaxations \cite{ahmadi2014dsos}.
3) In the literature on graphical models and belief propagation \cite{sontag2010approximate} has studied LP relaxations of sparse integer programs. By discretizing the space of voltages, similar techniques may be applied the problems studied here as well. 
\begin{table}
\begin{center}
\begin{tabular}{|c|c|c|c|c|c|}
\hline
Case number & 9 & 14 & 30 & 39 & 57\\
\hline
Size of SDP & 78 & 78 & 105 & 136 & 190\\
\hline  
\end{tabular}	
\caption{Size of PSD Constraints in sparse moment relaxation}	\label{tab:SizPSD}
\end{center}
\end{table}

\subsection{Comparison to Newton's Method}\label{sec:CompNewton}
We also compare the performance of our method to a naive implementation of Newton-Raphson and the default matpower power flow solver \cite{zimmerman2011matpower}. For a given network, we generate a random voltage phasor vector with voltages at the \PQ~buses sampled uniformly from the interval $(.9,1.1)$ and phases at the non-slack buses sampled uniformly from the interval $(-\overline{\theta},\overline{\theta})$ where $\overline{\theta}$ is a bound on the absolute value of the voltage phase at each non-slack bus. We then form the injection vector corresponding to the voltage phasor and solve the resulting power flow problem using 3 methods: a) The monotone operator method described in this paper, b) A naive implementation of Newton-Raphson with a constant unit step-size, initialized with all voltage phases equal to $0$ and voltage magnitudes at \PQ~buses equal to $1$ p.u, c) Matpower's default PF solver. 
We generate a 100 random instances for each value of $\overline{\theta}$ and check whether each solver succeeded in finding a power flow solution. We then plot the probability of success of each solver as a function of $\overline{\theta}$. The results for the 9-bus, IEEE 14-bus and IEEE 39-bus network (taking data from the matpower case files) are plotted in figure \ref{fig:Comparison}. The results show that the monotone operator method is superior to the others. Further, it has the advantage of certifying the non-existence of a solution in the operationally relevant domain of voltage phasors, when it fails, a guarantee the other methods cannot provide.
\begin{figure}
   \centering
   \begin{subfigure}[b]{0.8\columnwidth}
       \includegraphics[width=\textwidth]{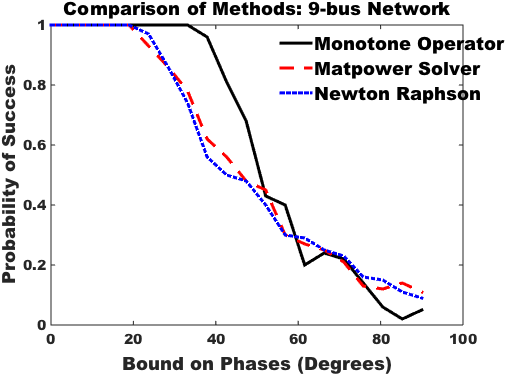}
       \caption{Comparison of Methods: 9-bus Network}
       \label{fig:9bus}
   \end{subfigure}
   ~ %add desired spacing between images, e. g. ~, \quad, \qquad, \hfill etc. 
     %(or a blank line to force the subfigure onto a new line)
   \begin{subfigure}[b]{0.8\columnwidth}
       \includegraphics[width=.9\columnwidth]{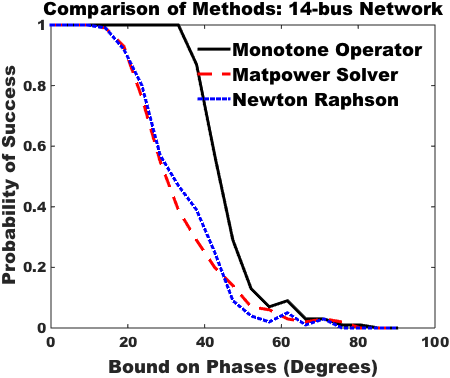}
       \caption{Comparison of Methods: 14-bus Network}
       \label{fig:14bus}
   \end{subfigure}
   ~ %add desired spacing between images, e. g. ~, \quad, \qquad, \hfill etc. 
   %(or a blank line to force the subfigure onto a new line)
   \begin{subfigure}[b]{0.8\columnwidth}
       \includegraphics[width=\textwidth]{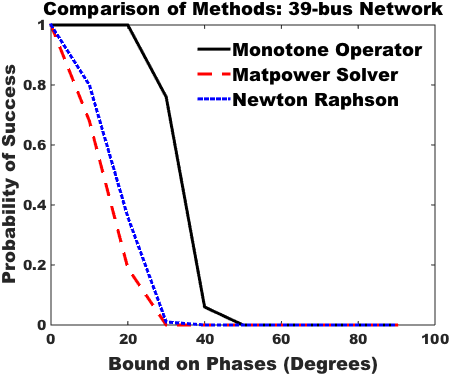}
       \caption{Comparison of Methods: 39-bus Network}
       \label{fig:39bus}
   \end{subfigure}
   \caption{Comparison of Monotone Operator methods with Newton-Raphson and Matpower's default solver}\label{fig:Comparison}
\end{figure}

\section{Conclusions}
We have developed a new approach to solving the power flow equations. The main feature of the new approach is its ability to find efficiently a solution within the ``monotonicity domain'' or prove that no solutions exist inside this domain. Our numerical results show that the monotonicity domain typically contains all power flow solutions satisfying standard operational constraints on voltage magnitudes and phases. The computation of the monotonicity domain can be done offline and exploits the moment relaxation approach, which has recently been successfully applied to large scale optimal power flow problems \cite{molzahn2014sparsity}\cite{molzahnLatest}.

Future work will focus on computational scalability of \eqref{eq:MomentRelax} and extensions of this work to further applications: state estimation, optimal power flow and small-signal stability analysis. In this regard, a related paper \cite{DjKostya} has recently been submitted, focusing on characterizing subsets of the feasible injection space in an OPF problem.
\begin{comment}
\begin{itemize}
\item State Estimation: We can apply the same approach to the static state estimation problem, where flows and voltages are measured at some points in the network and the task is to estimate the voltage phasor at each bus in the network. 
\item Optimal Power Flow: The solution of the variational inequality \eqref{eq:PFVI} can be recast as an optimization problem through the framework of merit functions \cite{MeritFun}, defined as
\[m\br{\Vc;s}=\max_{y \in \C} \inner{F_W\br{\Vc}}{y-\Vc}\]
where $s$ is the vector of injections. It can be shown that this is a convex function of the injection vector $s$. Using this formulation, it is possible to formulate optimal power flow as follows:
\[\min_{s,\Vc} c\br{s}+\lambda m\br{\Vc;s}\]
where $c\br{s}$ is the generation cost. The second term is used to implicitly enforce the power flow equations (since minimizing the second term with respect to $\Vc$ ensures that $\Vc$ is a solution of $F\br{\Vc}=s$). One can choose $\lambda$
\end{itemize}
	
\end{comment}

\section*{Acknowledgements}
We thank Dan Molzahn for extensive discussions and sharing code on sparse moment relaxations. The work at LANL was carried out under the auspices of the National Nuclear Security Administration of the U.S. Department of Energy at Los Alamos National Laboratory under Contract No. DE-AC52-06NA25396 and it was partially supported by DTRA Basic Research Project \#1002713399. The authors also acknowledge partial support of the Advanced
Grid Modeling Program in the US Department of Energy Office of Electricity.
\bibliographystyle{ieeepes}
\bibliography{Ref}	
\end{document}